\newcommand{\E}{{\mathbb E}}
\newcommand{\eqn}[1]{\begin{equation}#1\end{equation}}
\newcommand{\sss}{\scriptscriptstyle}
\begin{document}
\title{Degrees in preferential attachment networks\\ with an anomaly}

\author{Qiu Liang\inst{1}\orcidID{0009-0001-7530-7256} \and
Remco van der Hofstad\inst{1}\orcidID{0000-0003-1331-9697} \and
Nelly Litvak\inst{1}\orcidID{0000-0002-6750-3484}}
\authorrunning{Q. Liang et al.}
%
\institute{Department of Mathematics and Computer Science, Eindhoven University of Technology, Eindhoven, The Netherlands\\
\email{\{q.liang2, r.w.v.d.hofstad, n.v.litvak\}@tue.nl}}

\maketitle              

\begin{abstract}
We consider a preferential attachment model that incorporates an anomaly. Our goal is to understand the evolution of the network before and after the occurrence of the anomaly by studying the influence of the anomaly on the structural properties of the network. The anomaly is such that after its arrival it attracts newly added edges with fixed probability. We investigate the growth of degrees in the network, finding that the anomaly's degree increases almost linearly. We also provide a heuristic derivation for the exponent of the limiting degree distributions of ordinary vertices, and study the degree growth of the oldest vertex. We show that when the anomaly enters early, the degree distribution is altered  significantly, while a late anomaly has minimal impact. Our analysis provides deeper insights into the evolution of preferential attachment networks with an anomalous vertex.

\keywords{Preferential attachment network  \and Anomaly \and Degree Structure \and Dynamic network.}
\end{abstract}

\section{Introduction}

Dynamic network models, where nodes and edges appear or disappear over time, attracted a lot of attention in the network science literature. Among these models, the Preferential Attachment (PA) network, as presented by Barabási and Albert ~\cite{barabasi1999emergence}, has been particularly influential, because it explains the emergence of scale-free property in networks through the  `rich-get-richer' phenomenon. Specifically, the probability that a new vertex connects to an existing vertex is proportional to the degree of the existing vertex.

One extension of the standard PA network is the {\em superstar model} \cite{bhamidi2015superstar}, which is used to analyze key features of retweet networks. In this model, the initial vertex in the network is a {\em superstar}. At each time step, a newly added vertex connects to the {superstar} with probability $p$, or to one of the non-superstar vertices with probability $1-p$ according to the preferential attachment rule. The {superstar} alters the dynamics and the resulting degree distribution. It was shown in \cite{bhamidi2015superstar} that the non-superstar vertices follow a power-law degree distribution, though with a modified exponent $ 3 + \frac{p}{1-p}$. Additionally, the maximal degree of the non-superstar vertices is likewise affected; it grows slower with the network size. In this paper, we consider a PA model with an anomaly. Our anomaly is similar to the {\em superstar}, but it may arrive at any point of time.

Since the anomaly alters the network dynamics, our work is closely related to the line of research on PA models with a {\em change point}, where a parameter of the PA model changes at some point of time. For instance, \cite{banerjee2023fluctuation,bhamidi2018longrange} investigate methods to identify the change point in preferential attachment trees via embedding the discrete time tree in a continuous-time branching process and studying the proportion of leaves, while \cite{kay2023detecting} proposes an approach based on the fraction of vertices with minimal degree to detect a late change point. Furthermore, \cite{cirkovic2022likelihood} applies the likelihood ratio technique to estimate the change point in a PA model, and extend the method to detect multiple change points via screening and ranking, as well as  binary segmentation.

While {\em change point} detection addresses abrupt changes in the parameters of the network dynamics, here we instead focus on how structural properties evolve when a single anomalous vertex enters a PA network. To explore this, we suggest a new model incorporating an anomaly. Our model is an extension of the {\em superstar model}, and the two models are highly similar when the anomaly coincides with the {\em initial} vertex. Our main contributions are as follows:
\begin{enumerate}
    \item[$\rhd$] We propose a PA network with an anomaly. The network evolves according to the standard preferential attachment rule until the anomaly enters. Once the anomaly appears in the network, it attracts newly added edges with a fixed probability, plus a probability that depends on its current degree as in the normal PA dynamics.  
    \item[$\rhd$] We compute the mean degree of the anomaly as a function of network size, and study the mean degree of other  vertices and their convergence. 
    \item[$\rhd$] We provide a heuristic derivation of the limiting degree distribution of the ordinary vertices when the anomaly arrives in various different stages of the network's evolution. 
    \end{enumerate}

Our results serve as a first step towards the understanding of anomaly detection in PA networks. In the final section, we provide an outlook on this detection problem.

\section{Model description}
We start by introducing the preferential attachment network without an anomaly. The model constructs a graph sequence $\left( G_{t} \right)_{t \ge 2}$ such that each graph $ G_t $ is formed by adding one new vertex and $m$ edges connecting the new vertex to existing vertices, where $m \ge 1$. Let $ G_{t} = (V_{t}, E_{t})$, where $ V_{t} = \{v_1, \cdots, v_t \}$ and $ E_t \subseteq \{\{v,w\}\colon v,w \in V_t \}$.  The {\em initial} graph $G_1$ consists of a single vertex $v_1$ and $ m $ self-loops. For $t>1$, no self-loops are present. We treat the process of connecting each edge from a newly introduced vertex to an existing vertex in the network as an individual step. Specifically, let $ G_{t,j}$ denote the network after the $j$th edge of a newly added vertex $ v_{t}$ connects to an existing vertex $ v_i \in \left\{ v_1, v_2, \cdots, v_{t-1} \right\} $, where $ j \in [m]\equiv \{1, \ldots, m\}$. We let $ D_i (t,j)$ denote the degree of vertex $ i $ in $ G_{t,j}$, and introduce a {\em fitness} parameter $\delta > -m$. Further, we define $ G_{t} = G_{t,m} = G_{t+1,0}$, and $ D_i (t) = D_i (t,m)$. 
For $ t > 1$, $ j \in [m]$, we define the attachment rule for the $j$th edge linking to the vertex $ v_i \in \left\{ v_1, v_2, \cdots, v_{t-1} \right\} $ as defined in \cite{BerBorChaSab14,BerBorChaSab05}, and studied further in \cite{kay2023detecting}: 
\begin{equation}
    \label{rule-1}
    P(v_{t,j} \to v_i \mid  G_{t,j-1}) = \frac{D_i(t,j-1) + \delta}{2m(t-1)+(t-1)\delta + (j-1)}. 
\end{equation}
\smallskip

Assume that an anomaly occurs at some time $\tau$ satisfying $ 1< \tau < t$, where the attachment rule changes after the anomaly has occurred. We denote the anomaly by the vertex $v_\tau$. After $\tau$, each new edge connects to the anomaly with probability 
    \begin{equation}
    \label{p-beta-def}
    p \approx \frac{\beta}{ 2m+\beta+\delta},
    \end{equation}
where $\beta>0$ is a parameter of the model. Otherwise, with complementary probability, the edge connects to any existing vertex, including the anomaly, by following the usual PA rule. Formally, the dynamics of the model {\em at step $t$} are as follows:

\begin{enumerate}
    \item[(I)] If $ t<\tau $, the anomalous vertex has not occurred in the graph $G_t$, the attachment rule of $G_t$ is the same as \eqref{rule-1}.
    \item[(II)] If $ 1 < \tau \leq t$, the evolution rule changes as follows: 
\begin{equation}
\label{rule-2}
P(v_{t,j} \to v_i \mid G_{t,j-1})=
\begin{cases}
\displaystyle
 \frac{D_{i}(t,j-1)+\delta}{(t-1)(2 m+ \beta + \delta)+ j-1} & \text{ if } i \neq \tau ,\\
 \displaystyle\frac{(t-1) \beta+ D_{\tau}(t,j-1)+\delta}{(t-1)(2 m+ \beta + \delta)+ j-1
 } & \text{ if } i = \tau .
\end{cases}
\end{equation}
\end{enumerate}
To explain the rationale behind \eqref{rule-2}, suppose $m=1$, hence $j=1$ is the only edge of vertex $v_t$. Then
    \begin{align}
    \label{p-approx-precise}
    &\frac{(t-1) \beta+ D_\tau(t-1)+\delta}{(t-1)(2 + \beta + \delta)
    } = p+ (1-p)\frac{ D_\tau(t-1) +\delta}{(t-1)(2+\delta)},\nonumber
    \end{align}
with $p=\frac{\beta}{2+\beta+\delta}$. Hence,  we can think of our connection rule as connecting with probability $p$ to the anomaly, and proportional to the degrees (including the anomaly) with probability $1-p$. This explains the choice in \eqref{p-beta-def}, which is {\em exactly} correct for $m=1$. To make \eqref{p-beta-def} {\em exactly} correct for $m>1$, we have to choose $p=p_{\beta,t,j}$ slightly differently, so that 
    \begin{align*}
     &p_{\beta,t,j}+(1-p_{\beta,t,j}) \frac{D_\tau(t,j-1)+\delta}{(t-1)(2m+\delta)+j-1}\nonumber\\
     &\qquad=\frac{p_{\beta,t,j} \left[ (t-1)(2m+\delta)+(j-1) \right] + (1-p_{\beta,t,j})(D_\tau(t,j-1)+\delta)}{(t-1)(2m+\delta)+j-1}\nonumber\\
     &\qquad =\frac{(t-1) \beta+ D_{\tau}(t,j-1)+\delta}{(t-1)(2 m+ \beta + \delta)+ j-1},
    \end{align*}
which leads to $p_{\beta,t,j}\approx p$ in \eqref{rule-2}. Since the precise form of \eqref{rule-2} is a little simpler, we choose to work with this parameterization instead to simplify the formulas.

\smallskip

Figure \ref{fig:PA-example} illustrates an example of a PA network with an anomaly. We see that a large number of edges connect to the anomaly, yet a positive proportion of the edges of vertices arriving after $\tau$ are attached to ordinary vertices. Our goal is to study the structural properties of a PA network with an anomaly and the asymptotic degree distribution for different types of vertices.

\begin{figure}[t]
    \centering
    \begin{minipage}{0.45\textwidth}
        \includegraphics[width=\linewidth]{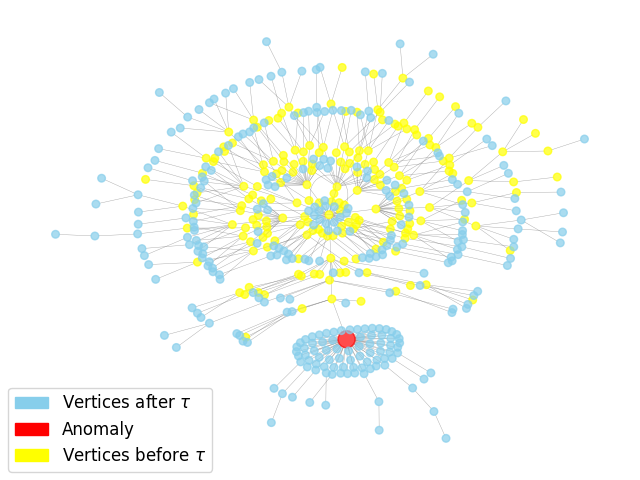}
        \subcaption{$\beta = 0.5$}
        \label{fig:PA-example - smaller beta}
    \end{minipage}
    \begin{minipage}{0.45\textwidth}
        \includegraphics[width=\linewidth]{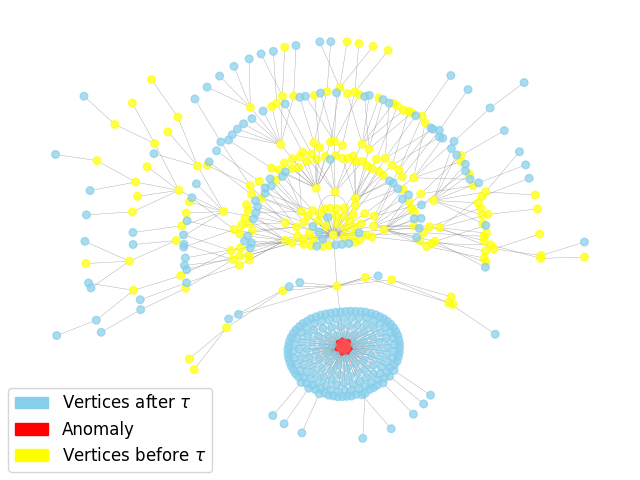}
        \subcaption{$\beta = 2.0$}
        \label{fig:PA-example - bigger beta}
    \end{minipage}
    \vspace{0.3cm}
    
    \caption{Examples of PA networks with an anomaly. Here $t =500, \tau = 200, \delta = 0, m = 1.$} 
    \label{fig:PA-example}
\end{figure}

\section{The growth of the degrees in $G_t$}

In this section, we analyze the growth of the degrees in $G_t$, both for the anomaly (Section~\ref{ssec:degree-anomaly}) and the ordinary vertices (Sections~\ref{ssec:degree-ordinary-expected}, \ref{ssec:degree-ordinary-convergence}). We follow the approach in \cite[Chapter 8]{Hofstad_2016}, adapted to our model with an anomaly.

\subsection{Expected degree of the anomaly}
\label{ssec:degree-anomaly}
We start by investigating the expected degree of the anomaly:
\begin{proposition}[Degree of the anomaly]
Consider an anomaly that occurs at time $\tau$, where $ 1< \tau < t$, and follows the attachment rule \eqref{rule-2} with given $ \delta > -m$, and $\beta > 0$. Then 
\begin{equation}
\begin{aligned}
\label{eq:D-tau}
    \E [D_{\tau} (t)+ \delta] = \frac{m \beta t}{m+\beta+\delta} + c_0 \frac{ \Gamma(t + \frac{m}{2m+\beta+\delta} ) \Gamma(\tau)}{\Gamma(t) \Gamma(\tau + \frac{m}{2m+\beta+\delta} ) }.
\end{aligned}
\end{equation}
where $ c_0 = m+\delta - \frac{m \beta \tau }{m+\beta+\delta}$. In particular, 
\begin{equation}
    \label{eq:D-tau-beta-infty}
\lim_{\beta \to \infty} \E [D_{\tau} (t)+ \delta ] = (t-\tau +1) m + \delta.\end{equation}
 
\end{proposition}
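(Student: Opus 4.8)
The plan is to derive a recursion for $\E[D_\tau(t)+\delta]$ by conditioning on $G_{t,j-1}$ and then summing over the $m$ edges at each step. First I would note that when vertex $v_{t+1}$ arrives (so we are in case (II) since $t+1>\tau$), the probability that its $j$th edge attaches to $\tau$, given $G_{t+1,j-1}$, is
\begin{equation*}
\frac{t\beta + D_\tau(t+1,j-1)+\delta}{t(2m+\beta+\delta)+j-1}.
\end{equation*}
Since $D_\tau$ can only change by $1$ at each sub-step, I would first establish (by an inner induction on $j$, or by a first-order approximation that is in fact exact in expectation up to the denominator's $j-1$ term) that
\begin{equation*}
\E[D_\tau(t+1)+\delta \mid G_t] \;\approx\; (D_\tau(t)+\delta)\Bigl(1+\frac{m}{t(2m+\beta+\delta)}\Bigr) + \frac{m\beta}{2m+\beta+\delta}.
\end{equation*}
Taking expectations gives an affine recursion $a_{t+1} = a_t\bigl(1+\tfrac{m}{t(2m+\beta+\delta)}\bigr) + \tfrac{m\beta}{2m+\beta+\delta}$ for $a_t := \E[D_\tau(t)+\delta]$, valid for $t\ge\tau$, with initial value $a_\tau = m+\delta$ (the anomaly arrives with degree $m$).

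Next I would solve this recursion explicitly. Writing $\theta := \tfrac{m}{2m+\beta+\delta}$, the homogeneous solution is the telescoping product $\prod_{s=\tau}^{t-1}(1+\theta/s)$, which in closed form is $\Gamma(t+\theta)\Gamma(\tau)/[\Gamma(t)\Gamma(\tau+\theta)]$ up to the usual $\Gamma(s+\theta)/(\Gamma(s)s^\theta)\to 1$ corrections. For the particular solution I would look for a linear-in-$t$ ansatz $a_t^{\rm part} = \gamma t$; substituting into the recursion and matching the leading term forces $\gamma = \tfrac{m\beta}{m+\beta+\delta}$ (the $\tfrac{m\beta}{2m+\beta+\delta}$ from the inhomogeneity combines with the $\gamma\theta t/t = \gamma\theta$ correction, and $\gamma(1-\theta)\cdot\frac{m}{2m+\beta+\delta}$-type bookkeeping gives $\gamma = \tfrac{m\beta}{2m+\beta+\delta}/(1-\theta) = \tfrac{m\beta}{m+\beta+\delta}$). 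Then $a_t = \gamma t + c_0 \cdot \Gamma(t+\theta)\Gamma(\tau)/[\Gamma(t)\Gamma(\tau+\theta)]$, and the constant $c_0$ is pinned down by the initial condition $a_\tau = m+\delta$, giving $c_0 = m+\delta-\gamma\tau = m+\delta-\tfrac{m\beta\tau}{m+\beta+\delta}$, exactly as in \eqref{eq:D-tau}.

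For the limit \eqref{eq:D-tau-beta-infty}, I would take $\beta\to\infty$ in the closed form: $\gamma = \tfrac{m\beta}{m+\beta+\delta}\to m$, $\theta = \tfrac{m}{2m+\beta+\delta}\to 0$ so the $\Gamma$-ratio $\to 1$, and $c_0\to m+\delta-m\tau$, yielding $mt + (m+\delta-m\tau) = (t-\tau+1)m+\delta$. This also matches intuition: as $\beta\to\infty$ every edge after $\tau$ lands on the anomaly, so its degree is its birth degree $m$ plus $m$ for each of the $t-\tau$ later arrivals, plus the $\delta$ shift. The main obstacle I anticipate is making the inner recursion over $j\in[m]$ rigorous rather than heuristic: the denominator $t(2m+\beta+\delta)+j-1$ varies with $j$, so one either carries the $j$-dependence through an exact product over $j=1,\dots,m$ and then argues the $O(1/t^2)$ discrepancies are absorbed (and that the stated formula is the clean representative of that family, consistent with the paper's own remark that \eqref{p-beta-def} is only exact for $m=1$), or one checks that summing the per-edge expected increments reproduces the affine recursion exactly. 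Verifying the $\Gamma$-function identity for the homogeneous product and confirming the constant via $t=\tau$ are then routine.
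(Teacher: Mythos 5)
Your overall strategy is the same as the paper's: derive a recursion for $\E[D_\tau(t)+\delta]$ from \eqref{rule-2} and solve it in closed form. Where you genuinely differ is in how the recursion is solved: the paper propagates the initial condition through the homogeneous product and then evaluates the accumulated inhomogeneous contribution $C^m(\beta,\delta,t)$ explicitly via a gamma-function telescoping sum, whereas you get the particular solution from the linear ansatz $\gamma t$, which forces $\gamma=\frac{m\beta}{m+\beta+\delta}$ and pins $c_0$ by the value at $t=\tau$; this is a shorter and cleaner route to \eqref{eq:D-tau}. The only point you leave open --- whether the one-step recursion obtained by summing over the $m$ sub-steps is exact --- should not be left as an approximation, because \eqref{eq:D-tau} is an exact identity, so an argument that merely absorbs $O(1/t^2)$ discrepancies would prove too little. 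In fact the recursion is exact, by precisely the telescoping the paper uses: with $c=2m+\beta+\delta$, conditioning on $G_{t-1}$ and iterating \eqref{rule-2} over $j=1,\dots,m$ gives
\begin{equation*}
\prod_{j=1}^{m}\Bigl(1+\tfrac{1}{(t-1)c+j-1}\Bigr)=\frac{(t-1)c+m}{(t-1)c}=1+\frac{m}{(t-1)c},
\qquad
\sum_{j=1}^{m}\frac{(t-1)\beta}{(t-1)c+j-1}\cdot\frac{(t-1)c+m}{(t-1)c+j}=\frac{m\beta}{c},
\end{equation*}
so that $\E[D_\tau(t)+\delta\mid G_{t-1}]=(D_\tau(t-1)+\delta)\bigl(1+\frac{m}{(t-1)c}\bigr)+\frac{m\beta}{c}$ holds with equality for every $t>\tau$. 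Likewise the homogeneous product is exactly $\prod_{s=\tau}^{t-1}\frac{s+m/c}{s}=\frac{\Gamma(t+m/c)\,\Gamma(\tau)}{\Gamma(t)\,\Gamma(\tau+m/c)}$ by $\Gamma(x+1)=x\Gamma(x)$, so no Stirling-type correction is needed there either. With these two equalities in place, your ansatz-plus-initial-condition argument yields \eqref{eq:D-tau} exactly as stated, and your treatment of the $\beta\to\infty$ limit \eqref{eq:D-tau-beta-infty} matches the paper's.
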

\begin{proof} Recall that $ \E [D_\tau(\tau) + \delta] = m + \delta$, for $ 1<\tau < t$. Based on \eqref{rule-2}, the expected degree of the anomaly satisfies the recursion
\begin{align*}
    \E & [D_{\tau} (t,m)+ \delta \mid D_{\tau} (t,m-1)] \\
    &= D_{\tau}(t,m-1) + \delta + \E [D_{\tau} (t,m) - D_{\tau} (t,m-1)\mid G_{t,m-1}] \\
    &= D_{\tau}(t,m-1) + \delta + \frac{(t-1) \beta+ D_{\tau}(t,m-1) +\delta}{(t-1)(2 m+ \beta + \delta)+ (m-1)} \\
    &= (D_{\tau}(t,m-1) + \delta) \left( 1 + \frac{1}{(t-1)(2 m+ \beta + \delta)+ (m-1)}  \right)\\
   & \quad + \frac{(t-1) \beta}{(t-1)(2 m+ \beta + \delta)+ (m-1)}.
\end{align*}
Taking expectation, and solving the recursion, gives that
\begin{align}
    \nonumber
    \E [D_{\tau} (t,m)+ \delta]
    &= (m+\delta) \prod_{k_2=\tau}^{t-1}\prod_{k_1 = 0}^{m-1} \left(1+\frac{1}{k_2(2m+\beta+\delta)+k_1}\right) + C^m(\beta,\delta,t) \\
    \nonumber
    & =(m+\delta) \prod_{k_2 = \tau}^{t-1}\frac{k_2 + \frac{m}{2m+\beta+\delta}}{k_2} + C^m(\beta,\delta,t) \\
    \label{eq:D-tau-Gamma+C}
    &= (m+\delta) \frac{\Gamma(t + \frac{m}{2m+\beta+\delta}) \Gamma(\tau)}{\Gamma(\tau + \frac{m}{2m+\beta+\delta} )\Gamma(t)} + C^m(\beta,\delta,t),
\end{align}
where $ C^m(\beta,\delta,t) $ is a function of $ \beta $, $ t $ and $ \delta$, that, with $ c = 2m+\beta+\delta$, equals
\begin{align*}
    C^m(\beta, \delta, t) & = \sum_{k=0}^{m-1} \left[ \frac{(t-1)\beta}{(t-1)c+k}\times \frac{(t-1)c+m}{(t-1)c+k+1} \right] \\
    & \quad + \frac{(t-1)c+m}{(t-1)c} \times \sum_{k=0}^{m-1} \left[ \frac{(t-2)\beta}{(t-2)c+k}\times \frac{(t-2)c+m}{(t-2)c+k+1}\right] +\cdots\\
    & \quad + \prod_{t_1 = \tau +1}^{t-1}\frac{t_1 c + m}{t_1 c}\times \sum_{k=0}^{m-1} \left[\frac{\tau \beta}{\tau c+k} \times \frac{\tau c+m}{\tau c+k+1} \right].
\end{align*}
For each time step, by the telescoping sum identity, 
\begin{align*}
    \sum_{k=0}^{m-1}\frac{t_1 \beta (t_1 c + m)}{(t_1 c +k)(t_1 c +k+1)} &= t_1 \beta (t_1 c + m)\sum_{k=0}^{m-1} \left[\frac{1}{(t_1 c +k)} - \frac{1}{(t_1 c +k+1)} \right] \\
    & = t_1 \beta (t_1 c + m) \left( \frac{1}{t_1 c} - \frac{1}{t_1 c + m} \right)= \frac{m \beta}{c}.
\end{align*}
Thus,
\begin{align*}
    C^m(\beta, \delta, t) &= \frac{m \beta}{c} \left[ 1 + \frac{(t-1)c+m}{(t-1)c} + \cdots + \prod_{t_1 = \tau +1}^{t-1}\frac{t_1 c + m}{t_1 c} \right] \\
    &= \frac{m \beta}{c}\sum_{t_1 = \tau + 1}^{t}\frac{\Gamma(t+\frac{m}{c})\Gamma(t_1 )}{\Gamma(t_1 +\frac{m}{c})\Gamma(t)}.
\end{align*}
Using properties of the gamma function, we can rewrite
\begin{align}
    \nonumber
    C^m(\beta, \delta, t) &= \frac{m \beta \Gamma(t+\frac{m}{c})}{c\Gamma(t)} \left( \frac{1}{\frac{m}{c}-1} \right) \times \sum_{t_1 = \tau + 1}^{t} \left( \frac{\Gamma(t_1 )}{\Gamma(t_1-1 +\frac{m}{c})} - \frac{\Gamma(t_1 +1)}{\Gamma(t_1 + \frac{m}{c})} \right) \\
    \nonumber
    &= \frac{m \beta}{m+\beta+\delta} \frac{\Gamma(t+\frac{m}{2m+\beta+\delta})}{\Gamma(t)}\left( \frac{\Gamma(t+1)}{\Gamma(t + \frac{m}{2m+\beta+\delta})} - \frac{\Gamma(\tau +1)}{\Gamma(\tau + \frac{m}{2m+\beta+\delta})} \right)\\
    \label{eq:D-tau-C}
    &= \frac{m \beta t }{m+\beta+\delta} - \frac{m \beta \tau }{m+\beta+\delta} \frac{\Gamma(t+\frac{m}{2m+\beta+\delta})\Gamma(\tau)}{\Gamma(t)\Gamma(\tau + \frac{m}{2m+\beta+\delta})}.
\end{align}
Substituting \eqref{eq:D-tau-C} into \eqref{eq:D-tau-Gamma+C}, we get \eqref{eq:D-tau}. 
In particular,
$ \frac{m \beta }{m+\beta+\delta} \to m$ and $\frac{m}{2m+\beta+\delta} \to 0$ as $ \beta \to \infty$, so that
\begin{align*}
    \lim_{\beta \to \infty} \E[D_\tau(t) + \delta] 
    = mt +(m+\delta - m \tau),
\end{align*}
which implies that all the incoming edges are expected to connect to the anomaly, and we get \eqref{eq:D-tau-beta-infty}. 

For general $ j \in [m]$, we can extend \eqref{eq:D-tau} to find $ \E [D_\tau(t,j) + \delta ] $. Applying the recursive approach,
\begin{align*}
    \E &[D_{\tau} (t,j)+ \delta \mid D_{\tau} (t,j-1)] \\
    &= D_{\tau}(t,j-1) + \delta + \frac{(t-1) \beta+ D_{\tau}(t,j-1) +\delta}{(t-1)(2 m+ \beta + \delta)+ j-1} \\
    & = (D_{\tau}(t-1,m) + \delta) \times \frac{t-1 + \frac{j}{2m + \beta+\delta}}{t-1} + \frac{\beta j}{2m + \beta+\delta},
\end{align*}
and taking expectation, we have
\begin{align*}
    \E [D_{\tau} (t,j)+ \delta] = \E[D_{\tau}(t-1,m) + \delta] \left( 1 + \frac{j}{(t-1)(2m + \beta+\delta)} \right) + \frac{\beta j}{2m + \beta+\delta}.
\end{align*}
\end{proof}
We obtain \eqref{eq:D-tau} by solving this recursion.
\qed
\medskip
 
Figure \ref{fig: degree of anomaly} shows the growth of $D_{\tau} (t)$ over time, with our theoretical result for $ \E [D_{\tau} (t)+ \delta]$ depicted as yellow line. It is very interesting that the coefficient of the linear growth, $\frac{\beta}{m+\beta+\delta}$ is larger than the probability $\frac{\beta}{2m+\beta+\delta}$ in \eqref{rule-2} that an edge attaches itself to an anomaly. Indeed, our assumption that vertices may attach to the anomaly also through the PA mechanism, has increased the rate of growth (rather than, say, giving rise to an extra polynomial term as we conjectured at the beginning). 
\begin{figure}
    \centering
        \includegraphics[width=0.6\linewidth]{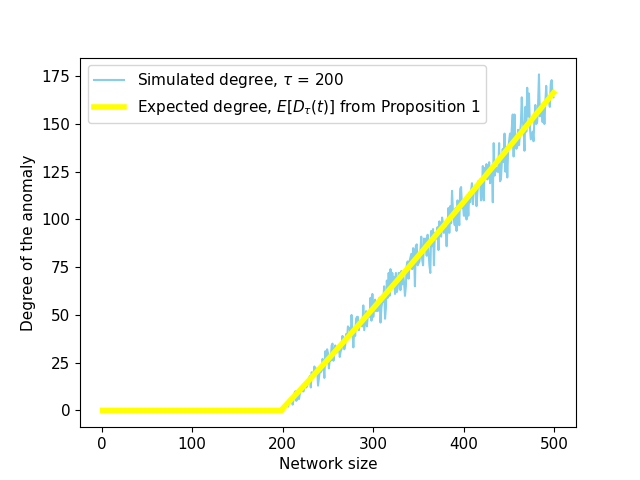}
    \caption{ The degree of the anomaly as a function of time $t$. 
    The parameters used are $\tau = 200, \delta = 0, \beta = 2.0, m = 1.$}
    \label{fig: degree of anomaly}
\end{figure}
To further explore the relation between $ D_{\tau} (t)$ and the network size, we assume that $ t = a \tau$, $ a \geq 1$. By Stirling's formula,
\begin{align}
\label{Stirling}
    \frac{\Gamma(t + a)}{\Gamma (t)} = t^{a}(1+O(1/t)),
\end{align}
when $t \to \infty$ and $a$ is fixed (see, e.g., \cite[(8.3.9)]{Hofstad_2016}). This approximation allows us to simplify the formula of expected degree of $v_\tau$. Indeed, applying it to \eqref{eq:D-tau} gives
\begin{align*}
    f(a) = \lim_{\tau \to \infty} \frac{\E [D_{\tau}(a \tau) + \delta]}{a \tau} = \frac{m\beta}{m+\beta+\delta}\left( 1- a^{-\frac{m+\beta+\delta}{2m+\beta+\delta}} \right),
\end{align*}
and
\begin{align*}
    f(1) &= 0, \qquad \lim_{a \to \infty} f(a) = \frac{m\beta}{m+\beta+\delta}, \\
    f'(a) &= \frac{m \beta}{2m+\beta+\delta} a^{-\frac{m+\beta+\delta}{2m+\beta+\delta}-1}, 
    \qquad f'(1) = \frac{m \beta}{2m+\beta+\delta}.
\end{align*}
We see that larger values of $a$, corresponding to an earlier anomaly, result in a larger deviation from the linear growth of $ \E[D_{\tau}(a \tau)]$. 

\subsection{Expected degree of the ordinary vertices}
\label{ssec:degree-ordinary-expected}
Despite the presence of an anomaly, ordinary vertices continue to receive edges based on the preferential attachment mechanism. We apply the recursive approach from  \cite[Section 8.3]{Hofstad_2016} to calculate their expected degrees, as the anomaly does not alter the edge assignment mechanism for these vertices. Regarding the attachment function, it differs for $ i \leq t \leq \tau$ and $t>\tau$. Therefore, we will analyze the following two scenarios:
\begin{enumerate}
\item[(1)] If  $ i < \tau$, for $ t < \tau$, the anomaly has not yet occurred, the expected degree of $ v_i $ is
\begin{align}
\E [D_i(G_{t}) + \delta]=
\begin{cases}
    \displaystyle
    (2m+\delta) \frac{\Gamma(t + \frac{m}{2m+\delta})}{\Gamma(1 + \frac{m}{2m+\delta}) \Gamma(t)},& i=1,\\
    \displaystyle(m+\delta) \frac{\Gamma(t + \frac{m}{2m+\delta}) \Gamma(i)}{\Gamma(i + \frac{m}{2m+\delta}) \Gamma(t)}, & 1<i<\tau.
\end{cases}
\end{align}
{See, e.g., \cite[Exercise 8.14]{Hofstad_2016} for the case where $m=1$.} Remarkably, for the model chosen here, the formula for $m>1$ is actually quite nice.

For $ t > \tau $, the attachment rule is changed after the anomaly appears. Then, the expected degree of $ v_i $ changes into 
\begin{equation}
\begin{aligned}
    \label{E(D_i)}
     \E [D_i(t)+\delta ] = 
     \begin{cases}
     \displaystyle
     (2m+\delta) \frac{ \Gamma(\tau + \frac{m}{2m+\delta})\Gamma(t  + \frac{m}{2m+\beta+\delta})}{\Gamma(1+ \frac{m}{2m+\delta})\Gamma(\tau+ \frac{m}{2m+\beta+\delta})\Gamma(t)}, & i=1,\\
     \displaystyle(m+\delta)\frac{\Gamma(\tau + \frac{m}{2m+\delta})\Gamma(t  + \frac{m}{2m+\beta+\delta})\Gamma(i)}{\Gamma(i+ \frac{m}{2m+\delta})\Gamma(\tau+ \frac{m}{2m+\beta+\delta})\Gamma(t)}, & 1<i<\tau.
     \end{cases}
\end{aligned}
\end{equation}

\item[(2)] If $ i > \tau$, for $ t > \tau$, the expected degree of $ v_i$ is 
\begin{equation}
 \E [D_i(t) + \delta] = (m+\delta) \frac{\Gamma(t+\frac{m}{2m+\beta+\delta})\Gamma(i)}{\Gamma(i+\frac{m}{2m+\beta+\delta})\Gamma(t)}.
 \end{equation}
\end{enumerate}

\subsection{Convergence of degrees for ordinary vertices}
\label{ssec:degree-ordinary-convergence}

In \cite[Section 8.3]{Hofstad_2016}, it is proved that the degree of vertices in a standard PA network scales as $ t^{\frac{1}{2 + \delta}} $ when $ m =1$, {and as $t^{\frac{1}{2 + \delta/m}} $ when $ m \geq 2$ (see \cite[Exercise 8.13]{Hofstad_2016}).}{} In our model, if $ t< \tau$, the anomaly is not included in $ G_t$, so that again the degrees of vertices are of the order $ t^{\frac{1}{2 + \delta/m}} $. When $ t> \tau $, it is necessary to analyze the convergence of the degrees of vertices separately.

For the vertices added before $\tau$, we consider the sequence $(M^{\sss\rm(1)}_i(t))_{t\geq i}$ given by
    $$  M^{\sss\rm(1)}_i(t)=\frac{D_i(t) + \delta}{\E [D_i(t) + \delta]}.$$
It is easy to see that $(M^{\sss\rm(1)}_i(t))_{t\geq i}$ is a non-negative martingale. Indeed, since $ m \leq D_i(t) < 2mt $, we have $ \E [|M^{\sss\rm(1)}_i(t)|] < \infty$. Computing the conditional expectation, we get 
    \begin{align*}
        \E  &[M^{\sss\rm(1)}_i(t+1)\mid M^{\sss\rm(1)}_i(t)] = \E [M^{\sss\rm(1)}_i(t+1)\mid D_i(t)] \\
        &= \frac{\E[D_i(t+1) + \delta \mid D_i (t)]}{\E[D_i(t+1) + \delta]} \\
        &= \frac{(D_i(t) +\delta)}{\E[D_i(t+1) + \delta]} \prod_{j=1}^{m} \left( 1 + \frac{1}{t(2m + \beta + \delta)+ j-1} \right) \\
        &= \frac{D_i(t)+\delta}{\E[D_i(t+1) + \delta]} \frac{t+\frac{m}{2m+\beta+\delta}}{t}\\
        &= \frac{D_i(t) + \delta}{\E[D_i(t) + \delta]} =M^{\sss\rm(1)}_i(t),
    \end{align*}
since also
    \eqn{
    \E[D_i(t+1) + \delta]=\E[D_i(t) + \delta]\frac{t+\frac{m}{2m+\beta+\delta}}{t}.
    }
Thus, $(M^{\sss\rm(1)}_i(t))_{t\geq i}$ is a non-negative martingale with respect to $ \left( G_t \right)_{t \geq i}$. According to the martingale convergence theorem, $M^{\sss\rm(1)}_i(t)$ converges almost surely to a limiting random variable as $ t \to \infty$ \cite[Theorem 2.24] {Hofstad_2016}, consequently, the result can be extended to establish the convergence of degrees when $ i < \tau $.  By the Stirling's formula, for sufficiently large $ t $ and $\tau $,
\begin{align*}
   \frac{D_i(t) + \delta}{\left( \frac{t}{\tau} \right)^{\frac{m}{2m+\beta+\delta}} \tau^{\frac{m}{2m+\delta}}} &= M_i^{\sss\rm(1)}(t) \frac{(d+\delta)\Gamma(i)}{\Gamma(i + \frac{m}{2m+\delta})} (1+o(1)) \\ &\underrightarrow{\scriptstyle{a.s.}}  \frac{(d+\delta)\Gamma(i)}{\Gamma(i + \frac{m}{2m+\delta})} \xi_i^{(1)},
\end{align*}
where $\xi_i^{(1)}$ is the almost sure limit of $M_i^{\sss\rm(1)}(t)$, and $d=2m$ for $i=1$, and $d=m$ for $i>1$. Thus, when $i<\tau$, $ \frac{D_i(t) + \delta}{\left( \frac{t}{\tau} \right)^{\frac{m}{2m+\beta+\delta}} \tau^{\frac{m}{2m+\delta}}}$ converges almost surely as $t \to \infty$ and $\tau \to \infty$.

Similarly, for the vertices added after $\tau$, let $(M^{\sss\rm(2)}_i(t))_{t\geq i}$ be given by
$$ M^{\sss\rm(2)}_i(t)=\frac{D_i(t) + \delta}{\E [D_i(t) + \delta]}, \qquad\qquad i > \tau.$$
Following the previously described steps again, we get $(M^{\sss\rm(2)}_i(t))_{t\geq i}$ is a non-negative martingale, and when $ t $ is large enough, 
\begin{align*}
    \frac{D_i(t) + \delta}{t^{\frac{m}{2m+\beta+\delta}} } &= M_i^{\sss\rm(2)}(t) \frac{(m+\delta)\Gamma(i)}{\Gamma(i + \frac{m}{2m+\beta+\delta})} (1+o(1))\\
    &\underrightarrow{{\scriptstyle a.s.}} \frac{(m+\delta)\Gamma(i)}{\Gamma(i + \frac{m}{2m+\beta+\delta})}\xi_i^{(2)},
\end{align*}
where $\xi_i^{(2)}$ is the almost sure limit of $M_i^{\sss\rm(2)}(t)$. Thus, when $i > \tau$, $ \frac{D_i(t) + \delta}{t^{\frac{m}{2m+\beta+\delta}}}$ converges almost surely. 

\section{Heuristic derivation of the limiting degree distribution for ordinary vertices}
In this section, we aim to investigate the limiting degree distribution of the ordinary vertices in a PA network containing an anomaly. For the standard PA network, the exponent of power law degree is $ 3+\frac{\delta}{m}$. The rigorous proof in \cite[Chapter 8]{Hofstad_2016} strongly relies on the recursive relation between the fractions of vertices with degree $k$. However, it is hard to adapt this approach, due to the alteration in the recursion after the appearance of an anomaly. Consequently, the rigorous derivation remains an open problem for future research. Here, we present a heuristic argument \cite{litvak2023randomness} that yields results consistent with our numerical simulations. 
\smallskip

The main idea is that if we can find the range of vertex index $ i $ such that the average degree of vertex $ v_i$ falls in $ (k-0.5, k+0.5) $, then the limiting fraction $p_k$ of vertices of degree $k$ can be evaluated as that range divided by $t$, as $t\to\infty$. We will explore the power-law degree distribution in different scaling regimes for the arrival time of anomaly $\tau$ as a function of time $t$. Specifically, we consider three cases: when the anomaly arrives late, mid-way or early.

\subsection{Late anomaly:  $\tau=t-t^\gamma, \gamma \in (0,1)$.}
\label{ssec:late anomaly case}
If the arrival time of $ v_\tau$ is  {quite late, for example when $ \tau = t - t^\gamma$, where $ \gamma \in (0,1)$, then only a vanishing fraction of  vertices arrives after the anomaly. Therefore, the degree distribution will be defined by the vertices that arrived before the anomaly. When $i< \tau $, the expected degree of $ v_i$ is \eqref{E(D_i)}, here we only consider $ i > 1$, as $ t $ and $ \tau $ are large enough, by the Stirling formula, $\E [D_i(t) + \delta]$ can be approximated as
\begin{equation}
    \E [D_i(t)+\delta ] = (m+\delta) \left( \frac{t}{t-t^\gamma} \right)^{\frac{m}{2m+\beta+\delta}}\left( \frac{t - t^\gamma}{i} \right)^{\frac{m}{2m+\delta}} (1+o(1)), \qquad i < \tau.
\end{equation}
Assume that $ \E [D_i(t) + \delta ] $ falls in the interval $ (k + \delta -0.5, k+ \delta+0.5)$, then $i$ should be in
$$ \left( (m+\delta)^{2+\frac{\delta}{m}} \frac{(1-t^{\gamma-1})^{\frac{\beta}{2m+\beta+\delta}} t}{(k+\delta+0.5)^{2+\frac{\delta}{m}}},(m+\delta)^{2+\frac{\delta}{m}}  \frac{(1-t^{\gamma-1})^{\frac{\beta}{2m+\beta+\delta}} t}{(k+\delta-0.5)^{2+\frac{\delta}{m}}} \right).$$
Let $ L_1 $ be the length of interval,
\begin{align*}
    L_1 &= (m+\delta)^{2+\frac{\delta}{m}} (1-t^{\gamma-1})^{\frac{\beta}{2m+\beta+\delta}} t \left(  \frac{1}{(k+\delta-0.5)^{2+\frac{\delta}{m}}} - \frac{1}{(k+\delta+0.5)^{2+\frac{\delta}{m}}} \right)\\
    &= (m+\delta)^{2+\frac{\delta}{m}} (1-t^{\gamma-1})^{\frac{\beta}{2m+\beta+\delta}} t \frac{k^{2+\frac{\delta}{m}}\left[ \left( 1 + \frac{\delta + 0.5}{k} \right)^{2+\frac{\delta}{m}}  - \left( 1 - \frac{  0.5 -\delta}{k} \right)^{2+\frac{\delta}{m}}\right]}{k^{ 4+\frac{2\delta}{m}}\left( 1 + \frac{\delta + 0.5}{k} \right)^{2+\frac{\delta}{m}}\left( 1 - \frac{  0.5 -\delta}{k} \right)^{2+\frac{\delta}{m}}}.
\end{align*}
Next we apply the binomial theorem to simplify the expression of $L_1$, to obtain, as $ k \to \infty$,
\begin{align*}
    L_1 = \left( 2+\frac{\delta}{m} \right) (m+\delta)^{2+\frac{\delta}{m}}(1-t^{\gamma-1})^{\frac{\beta}{2m+\beta+\delta}} t k^{-(3+\frac{\delta}{m})} (1+o(1)),
\end{align*}
so that, for $k\to\infty$,
    \begin{equation}
    \label{eq:pk-late}
    p_k^{\sss\rm(late)} = \lim_{t\to\infty}\frac{L_1}{t}  = \left( 2+\frac{\delta}{m} \right) (m+\delta)^{2+\frac{\delta}{m}} k^{-(3+\frac{\delta}{m})}  (1+o(1)).\end{equation}
Thus, the asymptotic degree distribution remains unchanged compared to the standard PA network. 

\subsection{Mid-way anomaly: $\tau=\alpha t$, $\alpha\in(0,1)$}
\label{ssec:mid-way anomaly case}
Assume that the arrival time of $ v_{\tau} $ scales linearly with $ t$, that is, $ \tau = \alpha t$, where $ \alpha \in (0,1)$. For $ 1<i<\tau$, the average degree is approximated by
\begin{equation}
    \E [D_i(t)+\delta ] = (m+\delta) \alpha^{- \frac{m}{2m+\beta+\delta}}\left( \frac{\alpha t}{i} \right)^{\frac{m}{2m+\delta}} (1+o(1)), \qquad i < \tau.
\end{equation}
Similarly, for the vertex $ v_i$ with $ i > \tau$, 
\begin{equation}
\label{eq:E-midway-after-tau}
    \E [D_i(t)+\delta ] = (m+\delta) \left( \frac{t}{i} \right)^{\frac{m}{2m+\beta+\delta}} (1+o(1)).
\end{equation}
We see that the average degree of the vertices born after $\tau$ grows slower than that of the vertices that arrived before $\tau$. Moreover, the right-hand side of \eqref{eq:E-midway-after-tau} is bounded, and thus it cannot fall into the interval $(k+\delta-0.5,k+\delta+0.5)$ for large $k$.   Therefore, we perform derivations for the vertices that have arrived before time $ \tau $ as the higher order probability to achieve a large degree $k$. Repeat the steps in Section~\ref{ssec:late anomaly case}, we get, as $k\rightarrow \infty,$
\begin{equation}
\label{eq:pk-midway}
p_k^{\sss\rm(mid-way)} =  \left( 2+\frac{\delta}{m} \right) (m+\delta)^{2+\frac{\delta}{m}}\alpha^{\frac{\beta}{2m+\beta+\delta}} k^{-(3+\frac{\delta}{m})} (1+o(1)).
\end{equation}
From this formula we see that the vertices arriving before $ \tau $ follow the power law distribution with exponent $ 3+\frac{\delta}{m} $. This is the same power-law exponent as in the standard PA network. However, we get a factor $\alpha^{\frac{\beta}{2m+\beta+\delta}}$ in front. This conforms with the intuition that the anomaly slows down the degree growth of high-degree vertices. Moreover, this factor decreases with $\beta$, as larger $\beta$ increases the effect of the anomaly.

\subsection{Early anomaly: $\tau =  t^\gamma, \gamma \in (0,1)$.}
\label{ssec:early anomaly case}
Suppose that the anomaly arrives quite early at $ \tau = t^{\gamma}$, where $ \gamma \in (0,1)$. Then the fraction of vertices born before $\tau$  among all vertices is vanishing. Therefore, we investigate the behavior of vertices born after $\tau$. Repeating the steps in Section~\ref{ssec:late anomaly case}, we obtain an asymptotic degree distribution given by
    \begin{equation}
    \label{eq:pk-early}
    p_k^{\sss\rm(early)} = \left( 2+ \frac{\beta}{m} + \frac{\delta}{m} \right)(m+\delta)^{2+\frac{\beta}{m} + \frac{\delta}{m}} k^{-(3+\frac{\beta}{m} + \frac{\delta}{m})}(1+o(1)), \; k \to \infty.\end{equation}
The power-law exponent of the degree distribution changes to $ 3+\frac{\beta}{m} + \frac{\delta}{m}$. We see that the anomaly has altered the power-law exponent, as also observed in the superstar model \cite{bhamidi2015superstar}. 

\subsection{Comparison to the empirical degree distribution}
Figure \ref{fig:degree-distribution-late-anomaly} - \ref{fig:degree-distribution-early-anomaly} show the empirical and the theoretically predicted  degree distributions for the PA network with the late anomaly, the mid-way anomaly, and the early anomaly, respectively.

\begin{figure}[h]
    \centering
    \begin{minipage}{0.45\textwidth}
        \centering
        \includegraphics[width=\linewidth]{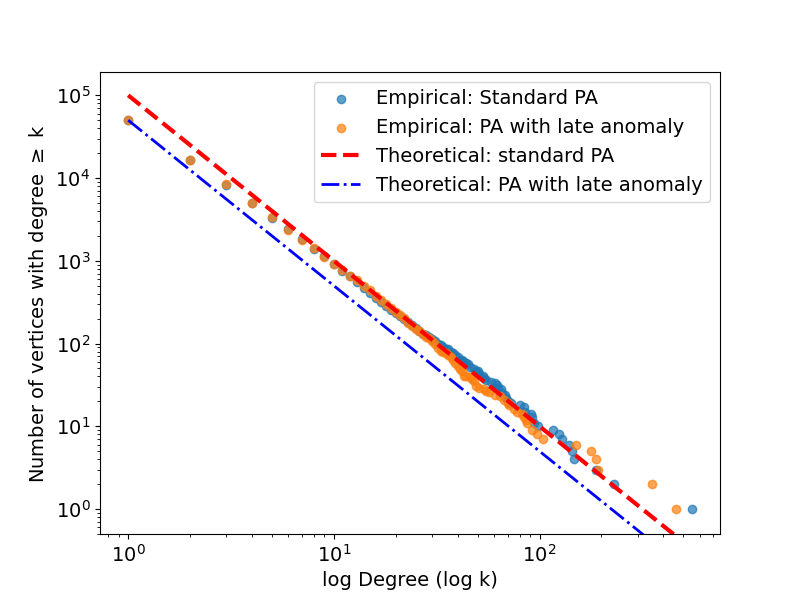}
        \subcaption{$ m =1, \beta = 5.0, \delta = 0$.}
        \label{fig:degree-distribution-late-anomaly with small m}
    \end{minipage}
    \hspace{0.5cm}
    \begin{minipage}{0.45\textwidth}
        \centering
        \includegraphics[width=\linewidth]{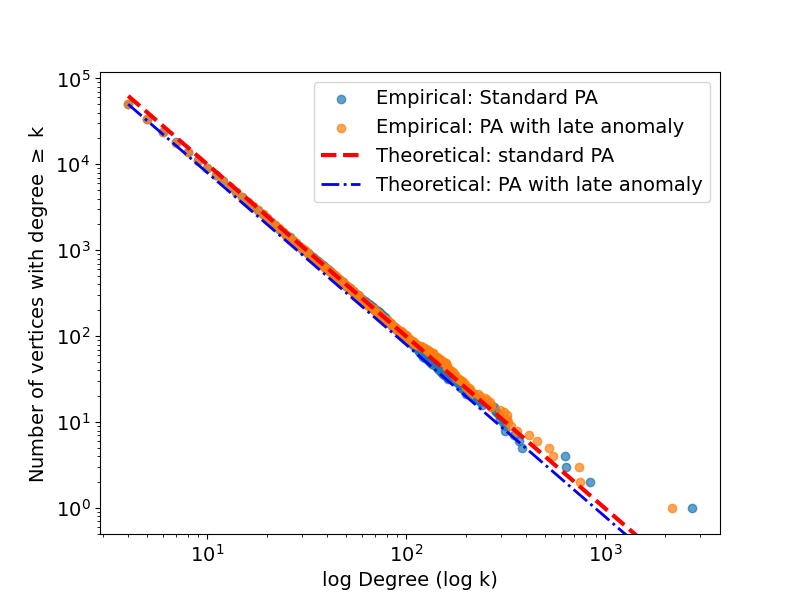}
        \subcaption{$ m =4, \beta = 10.0, \delta = 0$.}
        \label{fig:degree-distribution-late-anomaly with bigger m}
    \end{minipage}
    \caption{The complementary cumulative degree distribution for PA network with late anomaly, parameters are $t = 50000, \tau = 49950, \gamma = 0.3615$.}
    \label{fig:degree-distribution-late-anomaly}
\end{figure}

\begin{figure}[h]
    \centering
    \begin{minipage}{0.45\textwidth}
        \centering
        \includegraphics[width=\linewidth]{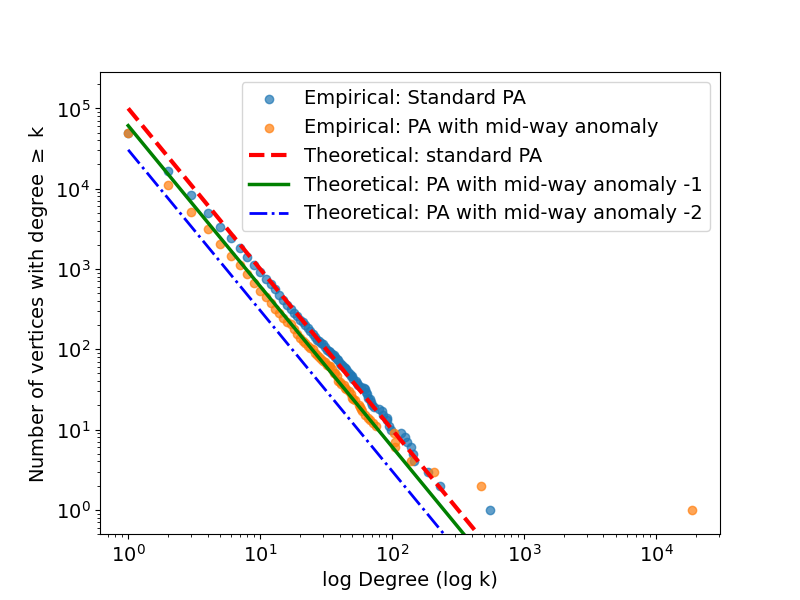}
        \subcaption{$ m =1, \beta = 5.0, \delta = 0$.}
        \label{fig:degree-distribution-midway-anomaly with small m}
    \end{minipage}
    \hspace{0.5cm}
    \begin{minipage}{0.45\textwidth}
        \centering
        \includegraphics[width=\linewidth]{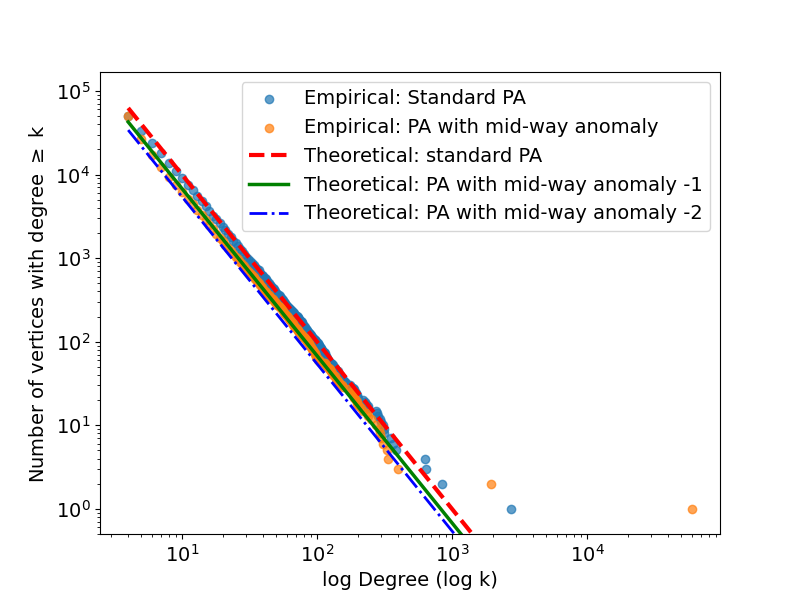}
        \subcaption{$ m =4, \beta = 10.0, \delta = 0$.}
        \label{fig:degree-distribution-midway-anomaly with bigger m}
    \end{minipage}
    \caption{The complementary cumulative degree distribution for PA network with mid-way anomaly, parameters are $t = 50000, \tau = 25000, \alpha = 0.5$. The green line uses the formula (8.4.11) in \cite[Chapter 8]{Hofstad_2016}, multiplied by factor $\alpha^{\frac{\beta}{2m+\beta+\delta}}$ as in \eqref{eq:pk-midway}. The dashed blue line follows \eqref{eq:pk-midway}.  }
    \label{fig:degree-distribution-midway-anomaly} 
\end{figure}
\vspace{-20pt}
\begin{figure}[h]
    \centering
    \begin{subfigure}{0.45\textwidth}
        \centering
        \includegraphics[width=\linewidth]{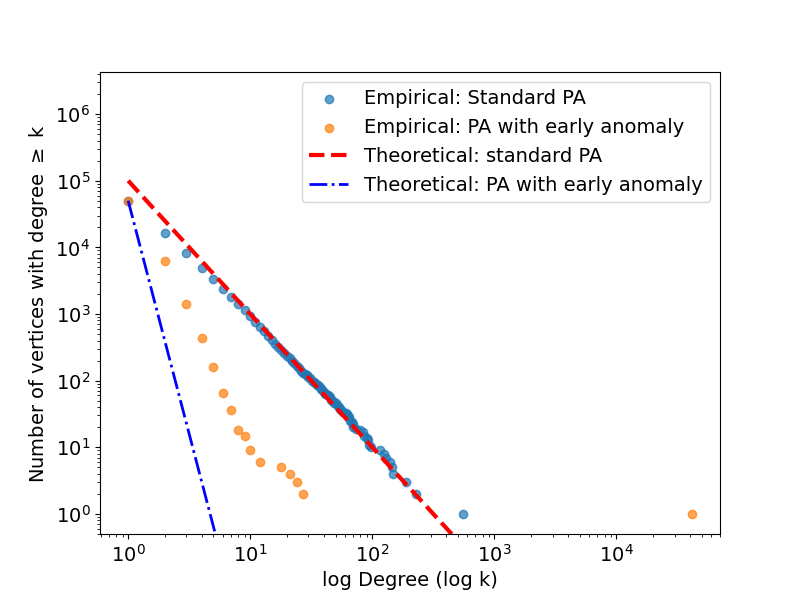}
        \subcaption{$ m =1, \beta = 5.0, \delta = 0$.}
        \label{fig:degree-distribution-early-anomaly with small m}
    \end{subfigure}
    \hspace{0.5cm}
    \begin{subfigure}{0.45\textwidth}
        \centering
        \includegraphics[width=\linewidth]{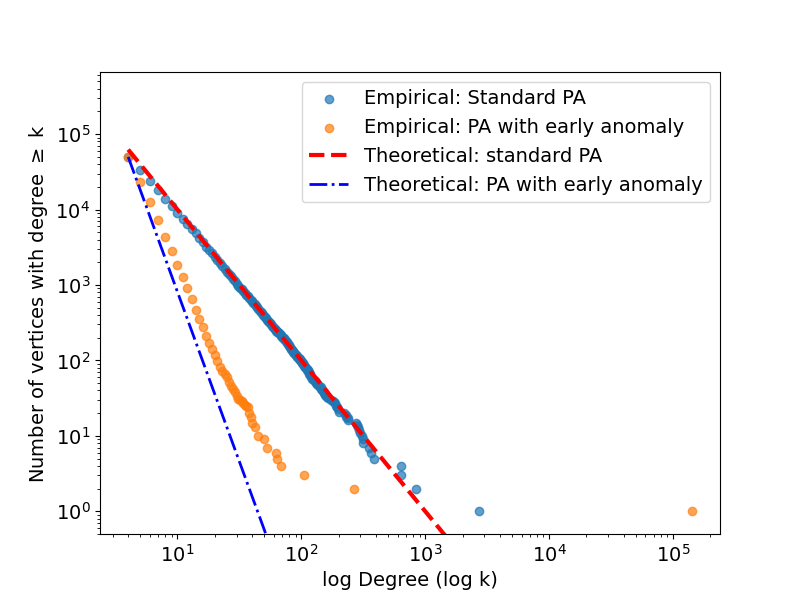}
        \subcaption{$ m =4, \beta = 10.0, \delta = 0$.}
        \label{fig:degree-distribution-early-anomaly with bigger m}
    \end{subfigure}
    \caption{The complementary cumulative degree distribution for PA network with early anomaly, parameters are $t = 50000, \tau = 50, \gamma = 0.3615$.}
    \label{fig:degree-distribution-early-anomaly}
\end{figure} 

Generally, we see that our computations correctly predict the slope, but the multiplicative factor may deviate from the experiments. In the future, a rigorous derivation for the mid-way and early anomaly is needed. The late anomaly is equivalent to standard PA network but we plot the line that we derived in \eqref{eq:pk-late} to show the difference between the correct multiplicative factor and that resulting from the heuristic derivation.

In Figure \ref{fig:degree-distribution-late-anomaly} we see that when the anomaly arrives near the end of the network's growth, the proportion of vertices with degree at least $k$ is close to that of the standard PA network. The straight line derived from \eqref{eq:pk-late} (dashed blue line) has the correct slope, but is slightly different from the empirical degree distribution. The dashed red line, based on the more precise formula (8.4.11) in \cite[Chapter 8]{Hofstad_2016} for the standard preferential attachment model has both the slope and the multiplicative factor matching the experiments. 

In Figure \ref{fig:degree-distribution-midway-anomaly} for the mid-way anomaly, the proportion of vertices with degree at least $k$ is close to that for the standard PA network, but differs by a factor smaller than~1 as predicted by \eqref{eq:pk-midway}. Interestingly, the experiments agree with formula (8.4.11) in \cite[Chapter 8]{Hofstad_2016} multiplied by the factor $\alpha^{\frac{\beta}{2m+\beta+\delta}}$. Thus, our heuristic derivation correctly predicts the effect of a mid-way anomaly. The visible outlier in Figure \ref{fig:degree-distribution-midway-anomaly} is the anomaly itself. 

In Figure~\ref{fig:degree-distribution-early-anomaly} for the early anomaly, the proportion of vertices with degree greater than $k$ is significantly lower, as the anomaly attracts a fraction of edges from the very beginning. However, \eqref{eq:pk-early} predicts an even  steeper slope. We also note that the tail of the distribution deviates strongly to the right from the straight line. We will come back to this phenomenon in the next section, where we analyze the behavior of the oldest ordinary vertex.

\section{Degree growth of the oldest vertex}

Because the existing vertices connect to a new coming vertex with a probability proportional to its degree, it is more likely that the old vertices receive more and more edges over time, and their degrees are of the order $ t^{\frac{1}{2+ \delta/m}}$ \cite[Theorem 8.2]{Hofstad_2016}. We further explore how the degree of the oldest vertex behaves after the anomaly's occurrence, by computing the exponent of power-law degree of the initial vertex $ v_1$. By Stirling's formula, as $ t $ and $\tau$ are large enough, the expected degree of the oldest vertex is
\begin{align*}
    \E[D_1(t) + \delta] = (2m+\delta) \left( \frac{t}{\tau} \right)^{\frac{1}{2+\frac{\beta}{m} + \frac{\delta}{m}}} \tau^{\frac{1}{2+ \frac{\delta}{m}}} (1+o(1)).
\end{align*}
In different scenarios, as $ t \to \infty$, the approximate expression are shown as follows:
\begin{description}
    \item[Late anomaly:] If $ \tau = t-t^{\gamma}, \gamma\in (0,1)$, 
    $$ \E [D_1(t) + \delta] = (2m+\delta)  t^{\frac{1}{2+ \frac{\delta}{m}}}(1+o(1)).$$
    \item[Mid-way anomaly:] If $\tau = \alpha t, \alpha \in (0,1)$, 
    $$ \E [D_1(t) + \delta] = (2m+\delta) \alpha^{\frac{1}{2+ \frac{\delta}{m}}-\frac{1}{2+\frac{\beta}{m} + \frac{\delta}{m}}} t^{\frac{1}{2+ \frac{\delta}{m}}}(1+o(1)).$$
    \item[Early anomaly:] If $ \tau = t^{\gamma}, \gamma\in (0,1)$,
    $$ \E [D_1(t) + \delta] = (2m+\delta) t^{\frac{\gamma}{2+ \frac{\delta}{m}}+\frac{1-\gamma}{2+\frac{\beta}{m} + \frac{\delta}{m}}}(1+o(1)).$$
\end{description}

Interestingly, the mid-way and the early anomaly affect the mean degree of the oldest vertex in a different way. 
The mid-way anomaly reduces the mean degree by a constant factor, while an early anomaly changes the exponent of $t$. Moreover, in the case of an early anomaly, $\E [D_1(t) + \delta]$ grows more quickly than $t^{\frac{1}{2+\frac{\beta}{m}+\frac{\delta}{m}}}$. The latter expression would be consistent with the power-law distribution for an early anomaly in \eqref{eq:pk-early} meaning that the oldest ordinary vertices have higher degrees than predicted by \eqref{eq:pk-early}; therefore, we see many outliers to the right in Figure \ref{fig:degree-distribution-early-anomaly}. 

\section{Conclusion and Further Research}

We introduced a PA network incorporating an anomaly, where the attachment rule of the anomaly is vastly different from that of the ordinary vertices. We derived insights on the growth of the degrees and their distribution in this model. Below we list some directions for further research:

\begin{enumerate}
    \item[(1)] {\bf Concentration of the degree of an anomalous vertex.} We see in Figure~\ref{fig: degree of anomaly} that the degree of the anomaly is close to its mean. The martingale convergence theorem is a standard way to prove such concentration. However, we could not directly apply this method to the degree of the anomaly due to the presence of a linear term in its expression. 
    \item[(2)] {\bf Convergence of degree sequences.} In \cite[Chapter 8]{Hofstad_2016}, the limiting degree distribution is derived from the recursive equations for the number of vertices of degree $k$. However, we could not use this method because the occurrence of an anomaly changes the recursion.  In previous works \cite{banerjee2023fluctuation,bhamidi2018longrange}, the asymptotic degree distribution of PA networks with change points was rigorously derived using {continuous-branching processes}, which embed the growth of the PA network for $m=1$ in continuous time. Investigating whether these techniques can be adapted to our model is a promising direction for gaining deeper insights into the interplay between anomalies and degree distributions.
    \item[(3)] {\bf Different attachment mechanism.} In this paper, we consider only the case where the anomaly attracts new edges at a {\em constant} additional rate. An interesting direction for future research would be to explore scenarios where the anomaly attaches new edges at rates that change over time, such as an increasing rate or for a randomly determined duration.
    \item[(4)] {\bf Anomaly detection.} Building on our results, detecting anomalies is a natural next step. Existing work \cite{LyapunovAnomalyDetection} has utilized Lyapunov-based method to detect certain anomalous events in PA network. Our model, where the anomaly arrives at a specific point and alters the attachment mechanisms, requires a different approach. While many studies on anomaly detection in dynamic networks focus on analyzing spatial and temporal  \cite{LiuDectectionTransformer,ZhengAddGraph2019}, or structural \cite{kipf2017GCN}, features, we find it an interesting problem to detect the anomaly based only on the history of $G_t$. Our initial attempts show that this problem is more challenging than one could expect given how strong our anomaly is. We hope to report on the progress in the near future. 
\end{enumerate}

\section*{Acknowledgments} The work of QL was supported by the China Scholarship Council. The work of RvdH and NL is supported by the
Netherlands Organisation for Scientific Research (NWO) through the Gravitation NETWORKS grant 024.002.003, and by the National Science Foudation under Grant No. DMS-1928930 while the authors were in residence at the Simons Laufer Mathematical Sciences Institute in Berkeley, California, during the Spring semester 2025.


\end{document}